\title[Asymptotic LRCs with multiple recovering sets]{Asymptotic construction of locally repairable codes with multiple recovering sets}
\author{Songsong Li} \address{School of Electronics, Information and Electric Engineering, Shanghai Jiao Tong University,
China 200240}\email{songsli@sjtu.edu.cn}
\author{Shu Liu}\address{Natl Key Lab Sci and Technol Commun, University of Electronic Science and Technology of China, Chengdu, China.} \email{shuliu@uestc.edu.cn}
\author{Liming Ma}\address{School of Mathematical Sciences, University of Science and Technology of China, Hefei 230026, China
}\email{lmma20@ustc.edu.cn}
\author{Chaoping Xing} \address{School of Electronics, Information and Electric Engineering, Shanghai Jiao Tong University,
China 200240}\email{xingcp@sjtu.edu.cn}
\date{}
\newtheorem{lemma}{Lemma}[section]
\newtheorem{theorem}[lemma]{Theorem}
\newtheorem{prop}[lemma]{Proposition}
\newtheorem{defn}{Definition}
\theoremstyle{remark}
\renewcommand{\epsilon}{\varepsilon}
\renewcommand{\le}{\leqslant}
\renewcommand{\ge}{\geqslant}
\def\Gal{{\rm Gal}}
\newcommand{\vnote}[1]{}
\def\F{\mathbb{F}}
\def \mC {\mathcal{C}}
\def \mA {\mathcal{A}}
\def \mA {\mathcal{A}}
\def \mB {\mathcal{B}}
\def \mC {\mathcal{C}}
\def \mL {\mathcal{L}}
\def \mP {\mathcal{P}}
\def \Xi {{X^{[i]}}}
\newcommand{\Ga}{\alpha}
\newcommand{\Gb}{\beta}
\newcommand{\Gs}{\sigma}
\def \bc {{\bf c}}
\def\supp {{\rm supp }}
\def\Aut {{\rm Aut }}
\def\LRC {{\rm locally repairable code\ }}
\def\LRCs {{\rm locally repairable codes\ }}
\def\Gal{{\rm Gal}}
\begin{document}
\maketitle

\begin{abstract}
Locally repairable codes  have been extensively investigated due to practical applications in distributed and cloud storage systems in recent years. However, not much work on asymptotic behavior of locally repairable codes has been done. In particular, there is few result on constructive lower bound of asymptotic behavior of locally repairable codes with multiple recovering sets. In this paper, we construct some families of  asymptotically good locally repairable codes with multiple recovering sets  via automorphism groups of function fields of the Garcia-Stichtenoth towers. The main advantage of our construction is to allow more flexibility of localities.
\end{abstract}

\section{Introduction}
Due to applications in distributed storage and cloud storage systems, locally repairable codes have been extensively studied in recent years.
A block code is said to have locality $r$ if  every symbol of a given codeword can be recovered by accessing at most $r$ other coordinates of this codeword. In a distributed storage, when the data in one node (or disk) is erased, we want to use  data from a small set of other nodes to repair the data in the failure node. However, if some nodes in this small set are not available, we have to find an alternative set of nodes to repair the failure node. Thus, it is desirable to have multiple sets of nodes available to repair data in each node. In other words, for each node $A$, we need several disjoint sets of nodes such that data in any of these disjoint sets can be used to repair data in the node $A$. The number $t$ of the disjoint sets is called availability. The formal definition of \LRCs with availability $t$ is given in Definition \ref{def:1}.

Unlike in the classical coding case, only few papers study the asymptotic behavior of locally repairable codes with multiple recovering sets. The main purpose of this paper is to make use of algebraic geometry code to construct more asymptotically good locally repairable codes with multiple recovering sets.

\subsection{Locally repairable codes and availability}\label{subsec:2.1}
Let $t$ be a positive integer and $r_1,r_2,\cdots,r_t$ be $t$ positive integers.
Informally speaking, a block code is said with locality $(r_1,r_2,\cdots,r_t)$ if  every symbol of a given codeword can be recovered by accessing anyone of the $t$ disjoint recovering sets of at most $r_i$ other coordinates of this codeword for any $1\le i\le t$.
Let $q$ be a prime power and let $\F_q$ be the finite field with $q$ elements. The formal definition of a locally repairable code with multiple recovering sets can be given as follows.

\begin{defn}\label{def:1}
Let $C\subseteq \F_q^n$ be a $q$-ary block code of length $n$. For each $\Ga\in\F_q$ and $i\in \{1,2,\cdots, n\}$, define $C(i,\Ga):=\{\bc=(c_1,\dots, c_n)\in C\; : \; c_i=\Ga\}$. For a subset $I\subseteq \{1,2,\cdots, n\}\setminus \{i\}$, we denote by $C_{I}(i,\Ga)$ the projection of $C(i,\Ga)$ on $I$.
The code $C$ is called a locally repairable code with availability $t$ and locality $(r_1,r_2,\cdots, r_t)$ if, for every $i\in \{1,2,\cdots, n\}$, there exist disjoint subsets
$I_{i,j}\subseteq \{1,2,\cdots, n\}\setminus \{i\}$ for $1\le j\le t$ with $|I_{i,j}|\le r_j$ such that  $C_{I_{i,j}}(i,\Ga)$ and $C_{I_{i,j}}(i,\Gb)$ are disjoint for any $\Ga\neq \Gb\in \F_q$.
\end{defn}

A $q$-ary linear \LRC of length $n$, dimension $k$, minimum distance $d$ and locality $(r_1,r_2,\cdots,r_t)$ is denoted to be a $q$-ary $[n,k,d;(r_1,r_2,\cdots,r_t)]$-locally repairable code.
For $t=1$, the above definition gives the usual $[n,k,d; r]$-locally repairable codes.
The well-known Singleton-type bound for locally repairable codes with locality $r$  was given in \cite{GHSY12} by
\begin{equation}\label{Singletonbound}
d\le n-k-\left\lceil \frac{k}{r} \right\rceil+2.
\end{equation}
There are various upper bounds for the minimum distance of locally repairable codes with availability $t$ in the literature.
If $r_1=r_2=\cdots=r_t=r$, then Tamo and Barg provided an upper bound in \cite{TB14-C} for the minimum distance
\begin{equation}\label{TB}
d\le n-\sum_{i=0}^t \left\lfloor \frac{k-1}{r^i} \right\rfloor.
\end{equation}
In \cite{WZ14}, Wang {\it et al.} proved an upper bound on the minimum distance of $[n,k,d;(r,r,\cdots,r)]$-locally repairable codes as
\begin{equation}\label{WZ}
d\le n-k-\left\lceil \frac{(k-1)t+1}{(r-1)t+1} \right\rceil+2.
\end{equation}
The above bound \eqref{WZ} can be achieved when $n>k(rt+1)$.
In \cite{RPDV16}, the authors provided an upper bound on the minimum distance of $[n,k,d;(r,r,\cdots,r)]$-locally repairable codes as
\begin{equation}\label{RPDV}
d\le n-k-\left\lceil \frac{kt}{r} \right\rceil+t+1.
\end{equation}
It is easy to see that all these bounds \eqref{TB}, \eqref{WZ} and  \eqref{RPDV} are the generalization of Singleton-type bound \eqref{Singletonbound}.
As for possible different localities, Bhadane and Thangaraj provided an upper bound for $[n,k,d;(r_1,r_2,\cdots,r_t)]$-locally repairable codes with $r_1\le r_2\le \cdots \le r_t$  in \cite{BT17} as follows:
\begin{equation}\label{BT}
d\le n-k+1-\sum_{i=1}^t\left\lfloor \frac{k-1}{\prod_{j=t+1-i}^t r_j}\right\rfloor.
\end{equation}
In \cite{BMQ20}, the authors proved that the minimum distance of an $[n,k,d;(r_1,r_2,\cdots,r_t)]$-locally repairable code is upper bounded by 
\begin{equation}\label{BMQ}
d\le n-k-\left\lceil \frac{(k-1)t+1}{1+\sum_{i=1}^t r_i} \right\rceil+2.
\end{equation}

In  \cite{TB14}, Tamo and Barg provided a remarkable construction of optimal locally repairable codes attaining the Singleton-type bound  \eqref{Singletonbound} and locally repairable codes with two recovering sets via
good polynomials and subcodes of Reed-Solomon codes.
This method was generalized systematically to construct various locally repairable codes from algebraic curves in \cite{BTV17}.
In particular, Jin {\it et al.} constructed locally repairable codes with multiple recovering sets via automorphism groups of the rational function fields in \cite{JKZ20}.
By generalizing the construction in \cite{JKZ20}, Bartoli {\it et al.} constructed locally repairable codes with multiple recovery sets via automorphism groups of function fields with genus $g\ge 1$, including the Hermitian function fields, Giulleti-Korchmaros curves, the generalized Hermitian curve and the norm-trace curve \cite{BMQ20}.
Alternatively, fiber products of algebraic curves were used to construct locally repairable codes with multiple recovering sets in \cite{HMM18,CKMTW23}.

For availability $t=1$, the asymptotic Gilbert-Varshamov bound was proven for locally repairable codes with locality $r$ in \cite{TBF16}.
Such an asymptotic Gilbert-Varshamov bound of  locally repairable codes with one recovering set can be achieved by automorphism groups of function fields of the Garcia-Stichtenoth tower \cite{BTV17,LMX19-2} or adding columns to parity-check matrices of algebraic geometry codes constructed from the Garcia-Stichtenoth tower \cite{MX20,CKMTW23}.
However, it is difficult to derive the asymptotic  Gilbert-Varshamov bound for  locally repairable codes with multiple recovering sets as claimed in \cite{TBF16}.
Barg {\it et al.} \cite{BTV17} provided an asymptotic construction of locally repairable codes with two recovering sets via automorphism groups of function fields of the Garcia-Stichtenoth tower \cite{GS95}. In particular, let $q=\ell^2$ be a square of a prime power, let $r_1$ and $r_2$ be two positive integers such that $(r_1+1)|(\ell+1)$ and $(r_2+1)|\ell$, then there exists a family of $q$-ary $[n,k,d; (r_1,r_2)]$-linear locally repairable codes whose information rate $R$ and relative distance $\delta$ satisfy
\begin{equation}\label{tworecoverbound}
\delta+ \frac{(r_1+1)(r_2+1)}{r_1r_2} R  \ge \frac{\ell-2}{\ell-1}-\frac{r_1+r_2-2}{q-1}.
\end{equation}

\subsection{Our main results}\label{subsec:1.2}

To our best knowledge, the above construction in \cite{BTV17} is the only asymptotic construction of locally repairable codes with multiple recovering sets in the literature. Thus, it is desirable to design asymptotically good locally repairable codes with other parameter regimes of locality $(r_1,r_2)$.
 
In this paper, by investigating group structures of the automorphism groups of the function fields of the Garcia-Stichtenoth towers given in \cite{GS95,GS96}, we are able to construct asymptotically good locally repairable codes with more flexible localities. More precisely, we obtain $[n,k,d; (r_1,r_2)]$-locally repairable codes with locality $(r_1,r_2)$  given in the Table I whose information rate $R$ and relative distance $\delta$ satisfy
$$\delta+ \frac{(r_1+1)(r_2+1)}{r_1r_2-1} R  \ge \frac{\ell-2}{\ell-1}-\frac{r_1+r_2}{q-\ell}-\frac{1}{q-\ell}\cdot \frac{(r_1-r_2)^2}{r_1r_2-1}$$
or 
$$\delta+ \frac{(r_1+1)(r_2+1)}{r_1r_2-1} R  \ge \frac{\ell-2}{\ell-1}-\frac{r_1+r_2}{q-1}-\frac{1}{q-1}\cdot \frac{(r_1-r_2)^2}{r_1r_2-1}.$$
The Table I shows that our construction provides more flexibility for locally repairable codes with locality $(r_1,r_2)$. 

\begin{table}[]\label{tab:1}
	\setlength{\abovecaptionskip}{0pt}
	\setlength{\belowcaptionskip}{10pt}
	\caption{Localities of locally repairable codes with availability}
	\center
	\begin{tabular}{@{}|c|c|c|c|@{}}
		$r_1$ & $r_2$ & Restriction & Reference \\ 
		$(r_1+1)|(\ell+1)$ & $(r_2+1)|\ell$     &  $\gcd(r_1+1,r_2+1)=1$     &  \cite{BTV17}           \\ 
		$(r_1+1)|\ell$    & $(r_2+1)|(\ell-1)$     &   $(r_2+1)|(r_1,\ell-1) $     & Thm \ref{thm:3.3}                 \\
		$(r_1+1)|(\ell-1)$          & $(r_2+1)|(\ell-1)$  &  $\gcd(r_1+1,r_2+1)=1$  & Thm  \ref{thm:3.4}         \\ 
		$(r_1+1)|\ell$          & $(r_2+1)|\ell$   &  $(r_1+1)(r_2+1)\le \ell$    & Thm  \ref{thm:3.5}          \\ 

		$(r_1+1)|(\ell+1)$          & $(r_2+1)|(\ell+1)$ &    $\gcd(r_1+1,r_2+1)=1$   & Thm    \ref{thm:3.5}                  \\
	\end{tabular}
\end{table}

\subsection{Organization}
This paper is organized as follows. In Section 2, we present some preliminaries on function fields over finite fields, algebraic geometry codes, and the Garcia-Stichtenoth tower. In Section 3, we construct many families of asymptotically good locally repairable codes with multiple recovering sets via automorphism groups of function fields of the Garcia-Stichtenoth towers given in \cite{GS95,GS96}.

\section{Preliminaries}\label{sec:2}
In this section, we present some preliminaries on algebraic function fields over finite fields, algebraic geometry codes, and the asymptotically optimal Garcia-Stichtenoth tower of function fields given in \cite{GS96}.

\subsection{Algebraic function fields over finite fields}
Let $F/\F_q$ be an algebraic function field of one variable over the full constant field $\F_q$.
Let $\mathbb{P}_F$ denote the set of places of $F$ and let  $g(F)$ denote the genus of $F$.
Let $\nu_P$ be the normalized discrete valuation with respect to the place $P$.
The principal divisor of $z\in F^*$ is defined by $(z)=\sum_{P\in \mathbb{P}_F} \nu_P(z)P.$
For any divisor $G$ of $F$,
the Riemann-Roch space associated to $G$ is defined by $\mathcal{L}(G)=\{z\in F\setminus \{0\}: (z)\ge -G\}\cup \{0\}.$
It is a finite-dimensional vector space over $\F_q$ and its dimension is at least $\dim(G)\ge \deg(G)-g(F)+1$ from Riemann's theorem  \cite[Theorem 1.4.17]{St09}.
Let $E$ be a subfield of $F$ with the same full constant field $\F_q$ and let Diff$(F/E)$ be the different of $F/E$ which is an effective divisor.
The Hurwitz genus formula  \cite[Theorem 3.4.13]{St09} yields
$2g(F)-2=[F:E](2g(E)-2)+\deg \text{Diff}(F/E).$

Let $\Aut(F/\F_q)$ be the automorphism group of $F$ over $\F_q$, that is to say,
$\Aut(F/\F_q)=\{\sigma: F\rightarrow F  \mid   \sigma \text{ is an } \F_q \text{-automorphism of } F\}.$
Let $G$ be a subgroup of $\Aut(F/\F_q)$. The fixed subfield of $F$ with respect to $G$ is defined by
$F^{G}=\{z\in F: \Gs(z)=z \text{ for all } \Gs\in G\}.$
From Galois theory, $F/F^{G}$ is a Galois extension with $\Gal(F/F^{G})=G$.
For any automorphism $\Gs\in \Aut(F/\F_q)$, $\Gs(P)=\{\Gs(z):z\in P\}$ is a place of $F$ with the same degree.
The stabilizer $\{\Gs\in G: \Gs(P)=P\}$ is the decomposition group of $P$ in $F/F^{G}$.
For any place $P\in \mathbb{P}_F$, the place $P\cap F^{G}$ splits completely in $F$ if and only if $\Gs(P)$ are pairwise distinct for all automorphisms $\Gs\in G$.

\subsection{Algebraic geometry codes}
In the subsection, we introduce the construction of algebraic geometry codes whose subcodes can be employed to construct locally repairable codes \cite{TV91,BTV17}.
Let $F/\F_q$ be a function field of one variable over the full constant field $\F_q$. Let $\mP=\{P_1,\dots,P_n\}$ be a set of $n$ distinct rational places of $F$. For a divisor $G$ of $F$ with $0<\deg(G)<n$ and $\supp(G)\cap\mP=\emptyset$, the algebraic geometry code associated to $\mP$ and $G$ is defined to be
$C_{\mL}(\mP,G):=\{(f(P_1),f(P_2),\dots,f(P_n)): \; f\in\mL(G)\}.$
By \cite[Theorem 2.2.2]{St09}, $C_{\mL}(\mP,G)$ is an $[n,k,d]$-linear code with dimension $k=\dim(G)$ and minimum distance $d\ge n-\deg(G)$.
If $V$ is an $\F_q$-subspace of $\mL(G)$, we can define a subcode of $C_{\mL}(\mP,G)$ by
$C(\mP,V):=\{(f(P_1),f(P_2),\dots,f(P_n)):\; f\in  V\}.$
It is easy to see that the dimension of  $C(\mP,V)$ is the dimension of vector space $V$ over $\F_q$ and the minimum distance of  $C(\mP,V)$ is lower bounded by $n-\deg(G)$ as well.

\subsection{Garcia-Stichtenoth tower}\label{asymtower}
Let $\ell=p^w$ be a prime power and $q=\ell^2$ be a square.
In this subsection, we introduce the Garcia-Stichtenoth tower $\mathcal{T}=(T_1,T_2,T_3,\cdots)$ in \cite{GS96} which can be given by the rational function field $T_1=\F_q(y_1)$ and $T_m=T_{m-1}(y_m)$ with
\begin{equation}\label{optimaltower}
y_{m}^{\ell}+y_{m}=\frac{y_{m-1}^{\ell}}{y_{m-1}^{\ell -1}+1}
\end{equation}
recursively for $m\ge 2$.
The main properties of the Garcia-Stichtenoth tower can be summarized as follows from \cite{GS96,LMX19-2} .

\begin{prop}\label{prop:2.1}
\begin{itemize}
\item[(i)] The degree of extension $T_m/\F_q(y_i)$ is $[T_m:\F_q(y_i)]=\ell^{m-1}$ for any $i=1,2,\cdots,m.$
\item[(ii)]  The infinite place $\infty$ of $T_1$, i.e., the pole of $y_1$, is totally ramified in the extension $T_m/T_1$ for each $m\ge 2$.
Let $P_{\infty,m}$ be the unique place of $T_m$ lying over $\infty$. The rational place $P_{\infty,m}$ is a common pole of $y_1,y_2,\cdots,y_m$ and $\nu_{P_{\infty,m}}(y_m)=-1$.
\item[(iii)] For any $\alpha_1\in \F_q$ with $\Ga_1^\ell+\Ga_1\neq 0$, then the zero of $y_1-\Ga$ in $T_1$ splits completely in the extension $T_m/T_1$.
For any place $P$ of $T_m$ lying over $y_1-\Ga_1$, there exist $\Ga_i\in \F_q^*$ with $\Ga_i^\ell+\Ga_i=\Ga_{i-1}^{\ell}/(\Ga_{i-1}^{\ell-1}+1)$ for all $2\le i\le m$
such that $P$ is the unique common zero of $y_1-\Ga_1,\cdots,y_m-\Ga_m$ in $T_m$. Hence, $P$ can be identified with the $m$-tuple $(\Ga_1,\Ga_2,\cdots,\Ga_m)$ and $y_i(P)=\alpha_i$ for $1\le i\le m$.
\item[(iv)]  The genus of $T_m$ is given by $$g(T_m)=\begin{cases}
(\ell^{\frac{m}{2}}-1)^2, & \text{ if } m \equiv 0(\mbox{mod } 2),\\
(\ell^{\frac{m+1}{2}}-1)(\ell^{\frac{m-1}{2}}-1), & \text{ if } m \equiv 1(\mbox{mod } 2).
\end{cases}$$
\item[(v)]  The Garcia-Stichtenoth tower  $\mathcal{T}=(T_1,T_2,T_3,\cdots)$ is asymptotically optimal, i.e.,
$$\lim_{m\rightarrow \infty} \frac{N(T_m)}{g(T_m)}=\ell-1.$$
\end{itemize}
\end{prop}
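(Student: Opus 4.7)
The plan is to prove each clause by combining Artin--Schreier ramification theory with the Hurwitz genus formula, working inductively one step at a time up the tower; the full argument is carried out in \cite{GS96} and refined in \cite{LMX19-2}. The essential observation is that every step $T_m = T_{m-1}(y_m)$ is a degree-$\ell$ elementary abelian $p$-extension defined by $y_m^\ell + y_m = u$ with $u := y_{m-1}^\ell/(y_{m-1}^{\ell-1}+1) \in T_{m-1}$, so the ramification of any place $P \in \mathbb{P}_{T_{m-1}}$ in $T_m/T_{m-1}$ is determined by the pole and zero structure of $u$ at $P$.

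For (i) and (ii), induct on $m$. At $P_{\infty,m-1}$ the hypothesis gives $\nu_{P_{\infty,m-1}}(y_{m-1}) = -1$, hence
$$\nu_{P_{\infty,m-1}}(u) \;=\; -\ell - (-(\ell-1)) \;=\; -1,$$
which is coprime to $p$. Artin--Schreier theory then forces $P_{\infty,m-1}$ to be totally ramified in $T_m/T_{m-1}$, with a unique extension $P_{\infty,m}$ satisfying $\nu_{P_{\infty,m}}(y_m) = -1$; this proves (ii), and multiplicativity of degrees immediately gives $[T_m : T_1] = \ell^{m-1}$. To obtain the general statement $[T_m : \F_q(y_i)] = \ell^{m-1}$, factor through $T_i$: one has $[T_m : T_i] = \ell^{m-i}$ by applying the same analysis to the subtower starting at $y_i$, while $[T_i : \F_q(y_i)] = \ell^{i-1}$ follows by running the analogous Artin--Schreier argument in reverse from $y_i$ down to $y_1$, so the product is $\ell^{m-1}$.

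For (iii), the Artin--Schreier splitting criterion asserts that the zero of $y_1 - \alpha_1$ in $T_1$ splits completely in $T_2/T_1$ if and only if $\alpha_1^\ell/(\alpha_1^{\ell-1}+1)$ lies in $\wp(\F_q)$, where $\wp(x) = x^\ell + x$. The key algebraic identity is
$$\frac{\alpha_1^\ell}{\alpha_1^{\ell-1}+1} \;=\; \frac{\alpha_1^{\ell+1}}{\alpha_1^\ell + \alpha_1} \;=\; \frac{N(\alpha_1)}{\Tr(\alpha_1)},$$
where $N$ and $\Tr$ denote the norm and trace from $\F_{\ell^2}$ to $\F_\ell$. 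Since $\wp$ restricted to $\F_{\ell^2}$ coincides with $\Tr_{\F_{\ell^2}/\F_\ell}$ and is surjective onto $\F_\ell$, this quotient lies in $\wp(\F_q)$ whenever $\Tr(\alpha_1) = \alpha_1^\ell + \alpha_1 \ne 0$. Iterating up the tower --- with the splitting condition preserved at each level because $\Tr(\alpha_i) = N(\alpha_{i-1})/\Tr(\alpha_{i-1}) \ne 0$ by induction --- yields $\ell^{m-1}$ distinct rational places of $T_m$ above each eligible seed, parametrised by the tuples $(\alpha_1,\dots,\alpha_m)$ satisfying the stated recursion.

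For (iv), apply Hurwitz to each $T_m/T_{m-1}$: the Artin--Schreier different formula contributes $2(\ell-1)$ at the totally ramified place $P_{\infty,m-1}$ (conductor exponent $1$), and the remaining ramification lies above the zeros of $y_{m-1}^{\ell-1}+1$, whose propagation must be tracked inductively through the tower. This combinatorial ramification analysis at the finite places is the main obstacle to a self-contained proof and constitutes the technical heart of \cite{GS96}; once it is completed, the stated closed form for $g(T_m)$ emerges. For (v), the lower bound $N(T_m) \ge (\ell^2 - \ell)\ell^{m-1}$ from (iii) combined with $g(T_m) \sim \ell^m$ from (iv) yields $\lim_m N(T_m)/g(T_m) \ge \ell - 1$, which matches the Drinfeld--Vladut upper bound and is therefore sharp.
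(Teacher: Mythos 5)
The paper gives no proof of this proposition at all: it merely states that these properties ``can be summarized as follows from [GS96, LMX19-2]''. Your proposal is therefore strictly more informative, and its outline is a correct reconstruction of how the argument goes in those references.

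Specifically, your computations are sound: at $P_{\infty,m-1}$ one has $\nu(u)=-\ell-(-(\ell-1))=-1$, coprime to $p$, so Artin--Schreier theory forces total ramification with a unique extension satisfying $\nu_{P_{\infty,m}}(y_m)=-1$, giving (i) and (ii); the identity
$$\frac{\alpha^\ell}{\alpha^{\ell-1}+1}=\frac{\alpha^{\ell+1}}{\alpha^\ell+\alpha}=\frac{N_{\F_{\ell^2}/\F_\ell}(\alpha)}{\Tr_{\F_{\ell^2}/\F_\ell}(\alpha)}\in\F_\ell=\wp(\F_q)$$
is exactly the splitting criterion for (iii), and your inductive preservation of $\Tr(\alpha_i)\neq 0$ (hence $\alpha_i\neq 0$) is what makes the recursion close up; and the lower bound $N(T_m)\ge(\ell^2-\ell)\ell^{m-1}$ from (iii) combined with $g(T_m)\sim\ell^m$ from (iv) and Drinfeld--Vl\u{a}du\c{t} gives (v). Deferring (iv) to [GS96] is entirely appropriate---the ramification bookkeeping over the zeros of $y_{m-1}^{\ell-1}+1$ is the genuine technical content of that paper and cannot honestly be reproduced in a few lines.

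Two small imprecisions worth flagging. First, you write ``(conductor exponent $1$)'' next to the different contribution $2(\ell-1)$; the different exponent of an Artin--Schreier step with $\nu(u)=-d$, $\gcd(d,p)=1$, is $(\ell-1)(d+1)$, so $d=1$ gives different exponent $2(\ell-1)$ but conductor $d+1=2$, not $1$; the number you actually use is the correct one. Second, the assertion that $[T_i:\F_q(y_i)]=\ell^{i-1}$ ``follows by running the analogous Artin--Schreier argument in reverse'' is hand-wavy: the defining relation, rewritten as $y_{j-1}^\ell-(y_j^\ell+y_j)y_{j-1}^{\ell-1}-(y_j^\ell+y_j)=0$, is degree $\ell$ in $y_{j-1}$ but not of Artin--Schreier shape, so one must argue irreducibility by a ramification argument (as in [GS96]) rather than by direct analogy. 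Neither of these affects the overall correctness of your outline.
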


Let $\mA$ be the set of automorphisms $\Gs$ of $T_m$ over $\F_q$ with the following form
$$\begin{cases}
\Gs(y_i) = cy_i  \text{ for } i=1,2,\cdots, m-1,\\
\Gs(y_m) = cy_m+a,
\end{cases} $$
where $c\in \F_{\ell}^*$ and $a^\ell+a=0$.
Let $u$ be a divisor of $\ell-1$ and $H$ be the cyclic  subgroup with order $u$ of the multiplicative group $\F_\ell^*$.
Let $v$ be an integer with $0\le v\le w$ and $u|(p^v-1)$.
Let  $h=\min \{ t\in \mathbb{Z}_+: u|(p^t-1)\}$. Then $\F_p(H)=\F_{p^h}$ and it is a subfield of $\F_\ell$ and $\F_{p^v}$.
Let $W$ be an  $\F_{p^h}$-subspace of $\{a\in \F_q: a^\ell+a=0\}$ with dimension $v/h$.
Let $H_1=\{\sigma\in \mA: \sigma(y_i)=y_i\text{ for } 1\le i\le m-1, \Gs(y_m)=y_m+a\text{ for }  a\in W\}$ and $H_2=\{\sigma\in \mA: \sigma(y_i)=cy_i, c\in H, 1\le i\le m\}$.
From \cite[Proposiiton III.2]{LMX19-2}, $G=H_1H_2$ is a subgroup of $\mA$ with order $up^v$.

\begin{prop}\label{prop:2.2}
Let $\ell=p^w$ be a prime power.
Let $v$ be an integer with $0\le v\le w$ and let $u$ be a positive integer satisfying $u|\gcd(p^v-1, \ell-1)$.
Let $H_1$ and $H_2$ be subgroups of $\mA$ defined as above and $G=H_1H_2$.
Then $G$ is a semi-direct product of $H_1$ and $H_2$.
\end{prop}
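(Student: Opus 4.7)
The plan is to verify, given that $G=H_1H_2$ is already known to be a subgroup of $\mA$ by \cite[Proposition III.2]{LMX19-2}, the two remaining conditions for an internal semi-direct product: trivial intersection $H_1\cap H_2=\{e\}$, and normality of $H_1$ in $G$. Once these are in hand, the standard recognition criterion delivers $G\cong H_1\rtimes H_2$.

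First I would compute the intersection. Any $\sigma\in H_1\cap H_2$ must simultaneously fix $y_i$ for $i<m$ (because $\sigma\in H_1$) and send $y_i\mapsto cy_i$ for a common $c\in H$ (because $\sigma\in H_2$). Matching these forces $c=1$, and then the two descriptions of $\sigma(y_m)$, namely $y_m$ (from $H_2$ with $c=1$) and $y_m+a$ (from $H_1$), force $a=0$. Hence $H_1\cap H_2=\{e\}$.

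Next I would prove normality of $H_1$ by a single conjugation calculation. For $\sigma_1\in H_1$ determined by $a\in W$ and $\sigma_2\in H_2$ determined by $c\in H$, one works out $\sigma_2\sigma_1\sigma_2^{-1}$ on each generator and obtains
\[
\sigma_2\sigma_1\sigma_2^{-1}(y_i)=y_i \quad (1\le i\le m-1), \qquad \sigma_2\sigma_1\sigma_2^{-1}(y_m)=y_m+c^{-1}a.
\]
I expect the main obstacle, and indeed the only nontrivial point, to be verifying that the new translation parameter $c^{-1}a$ still lies in $W$, so that the conjugate again belongs to $H_1$. This is where the careful definitions earn their keep: by minimality of $h$ with $u\mid p^h-1$, the cyclic group $H$ is contained in $\F_{p^h}^*$; and since $W$ is by hypothesis an $\F_{p^h}$-subspace of $\{a\in\F_q:a^\ell+a=0\}$, it is closed under multiplication by $c^{-1}\in\F_{p^h}^*$. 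Thus every conjugate of an element of $H_1$ by an element of $H_2$ again lies in $H_1$, and since $H_1$ is obviously normalized by itself, $H_1$ is normal in $G=H_1H_2$.

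Combining these observations with $G=H_1H_2$ gives the desired internal semi-direct product $G\cong H_1\rtimes H_2$, with $c\in H_2$ acting on $W\cong H_1$ via the rule $a\mapsto c^{-1}a$. As a sanity check on orders, $|G|=|W|\cdot|H|=p^v\cdot u$, in agreement with \cite[Proposition III.2]{LMX19-2}.
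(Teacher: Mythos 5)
Your proof is correct and follows essentially the same approach as the paper: verify $H_1\cap H_2=\{\mathrm{id}\}$ and show $H_1\trianglelefteq G$ by a direct conjugation computation on the generators $y_i$. The only differences are cosmetic — you conjugate as $\sigma_2\sigma_1\sigma_2^{-1}$ (yielding translation parameter $c^{-1}a$) whereas the paper uses $\tau^{-1}\sigma\tau$ with $\tau\in G$ (yielding $ca$) — and you helpfully spell out the closure fact $c^{-1}a\in W$ via $H\subseteq\F_{p^h}^*$ and $\F_{p^h}$-linearity of $W$, which the paper uses tacitly.
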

\begin{proof}
It is easy to see that $H_1\cap H_2=\{\text{id}\}$. Let $\sigma\in H_1$ be an automorphism of $T_m$ given by
$$\begin{cases}
\Gs(y_i) = y_i  \text{ for } i=1,\cdots, m-1,\\
\Gs(y_m)= y_m+a_1,
\end{cases} $$
with  $a_1\in W$. Let $\tau\in G$ be an automorphism of $T_m$ with
$$\begin{cases}
\tau(y_i)= cy_i  \text{ for } i=1,\cdots, m-1,\\
\tau(y_m)= cy_m+a_2,
\end{cases} $$
where  $c\in H$ and $a_2\in W$. It is easy to verify that $\tau^{-1}\Gs\tau\in H_1$, i.e.,
$$\begin{cases}
\tau^{-1}\Gs\tau(y_i) =y_i  \text{ for } i=1,\cdots, m-1,\\
\tau^{-1}\Gs\tau(y_m) = y_m+ca_1.
\end{cases} $$
Hence, $G$ is a semi-direct product of $H_1$ and $H_2$.
\end{proof}

\section{Asymptotically good locally repairable codes with availability}
In this section, we first introduce a general construction of locally repairable codes with two disjoint recovering sets from automorphism groups of function fields and then employ this method to construct asymptotically good families of locally repairable codes with two recovering sets from the Garcia-Stichtenoth towers.

\subsection{Group-theoretic construction of locally repairable codes with availability}
In this subsection, we present a general construction of locally repairable codes with two disjoint recovery sets from automorphism groups of function fields similarly as \cite{BTV17,JKZ20,BMQ20}. This technique has been used to construct optimal locally repairable codes via automorphism groups of algebraic curves and algebraic surfaces \cite{JMX20,LMX19,SVV21,MX23}.

Let $F/\F_q$ be an algebraic function field of one variable over the full constant field $\F_q$. Let $\Aut(F/\F_q)$ be the automorphism group of $F$ over $\F_q$.
For $i=1,2$, let $H_i$ be a subgroup of $\Aut(F/\F_q)$ of order $r_i+1$ and let $F_i$ be the fixed subfield of $F$ with respect to $H_i$.
Assume that $\Aut(F/\F_q)$ has a subgroup $G$ which is a semi-direct product of $H_1$ and $H_2$, i.e., $G=H_1\rtimes H_2$.
Denote by $E$ the fixed subfield $F^{G}$. From Galois theory, $F/E$ is a Galois extension with Galois group $\Gal(F/E)=G.$

Assume that there exist $s$ rational places $Q_1,Q_2,\cdots, Q_s$ of $E$ which are all splitting completely in $F/E$.
Let $P_{i,1}, P_{i,2},\cdots, P_{i,(r_1+1)(r_2+1)}$ be the $(r_1+1)(r_2+1)$ rational places of $F$ lying over $Q_i$ for each $1\le i \le s$.
Put $\mP=\{P_{i,j}: 1\le i \le s, 1\le j\le (r_1+1)(r_2+1)\}$ be the set of evaluation places. Hence, the cardinality of $\mP$ is $n=s(r_1+1)(r_2+1)$.

Suppose that there exist $w_i\in F$ for $i=1,2$ satisfying the following conditions
\begin{itemize}
\item[(1)] $1,w_{i},w_{i}^{2},\cdots, w_{i}^{r_i-1}$ are linearly independent over $F_i$;
\item[(2)] $w_{i}$ takes pairwise distinct evaluations on the rational places of $\{\sigma(P): \sigma\in H_i\}$ for any $P\in \mP$.
\end{itemize}
For a positive integer $1\le d\le n$, choose an effective divisor $D$ of $F$ with degree $n-d$ such that $\supp(D)\cap \supp(\mP)=\emptyset$ and $(r_i-1)(w_i)_\infty\le D$ for $i=1,2$.
Let $D_i$ be the largest possible effective divisor of $F_{i}$ such that $(r_i-1)(w_i)_\infty+Conorm_{F/F_{i}}(D_i)\le D$ for $i=1,2$.
Let $\{f_{i,j}: 1\le j\le \dim(D_i)\}$ be a basis of $L(D_i)$. Let $V_i$ be  subspaces of the Riemann-Roch space $\mL(D)$ defined by
$$V_{i}=\left\{\sum_{l=0}^{r_i-1}\left(\sum_{j=1}^{\dim(D_i)} c_{i,j,l} f_{i,j}\right)w_i^l: \forall c_{i,j,l}\in \F_q\right\}.$$
Let $V=V_1\cap V_2$ be a non-trivial subspace of  $\mL(D)$.
Let $ev_{\mP}$ be the evaluation map
$$ev_{\mP}:V\rightarrow \F_q^n; f\mapsto (f(P_1),f(P_2),\cdots,f(P_n)).$$
We shall show that the image $ev_{\mP}(V)$ of the evaluation map
is an $[n,\dim(V),\ge d; (r_1,r_2)]$-locally repairable codes with two disjoint recovering sets.

\begin{prop}\label{prop:3.1}
Let $\mP$ and $V$ be defined as above. Then the image of the evaluation map
$$ev_{\mP}:V\rightarrow \F_q^n; f\mapsto (f(P_1),f(P_2),\cdots,f(P_n))$$
is an $[n,\dim(V),\ge d; (r_1,r_2)]$-locally repairable codes with two disjoint recovering sets.
\end{prop}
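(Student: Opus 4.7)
The plan is to verify the four claims in sequence: length $n$, dimension $\dim(V)$, minimum distance $\ge d$, and locality $(r_1,r_2)$ with two disjoint recovering sets. Length is immediate from $|\mP|=s(r_1+1)(r_2+1)$. For dimension and minimum distance I would apply the standard AG-code argument: $V\subseteq\mL(D)$ with $\deg D=n-d<n$, so if $f\in V$ vanishes at more than $\deg D$ of the evaluation places it must lie in the Riemann--Roch space of a negative-degree divisor and hence equal zero. This gives both injectivity of $ev_\mP$ (so the image has $\F_q$-dimension $\dim V$) and the bound $n-\deg D=d$ on the weight of any nonzero codeword. The containment $V\subseteq\mL(D)$ itself I would check from the construction: for $f\in V_i$ written as $\sum_{l=0}^{r_i-1}g_l w_i^l$ with $g_l\in\mL(D_i)\subset F_i$, the inequalities $(g_l)\ge -{\rm Conorm}_{F/F_i}(D_i)$ and $(w_i^l)\ge -(r_i-1)(w_i)_\infty$ combine with the hypothesis $(r_i-1)(w_i)_\infty+{\rm Conorm}_{F/F_i}(D_i)\le D$ to give $(f)\ge -D$.

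For the locality, fix $P\in\mP$ lying over some $Q_i$ and take the two recovering sets to be the $H_1$- and $H_2$-orbits of $P$ with $P$ itself deleted,
$$I_{P,1}=\{\sigma(P):\sigma\in H_1\}\setminus\{P\},\qquad I_{P,2}=\{\sigma(P):\sigma\in H_2\}\setminus\{P\}.$$
Since $Q_i$ splits completely in $F/E$, the stabilizer of $P$ in $G=H_1\rtimes H_2$ is trivial, so $|I_{P,k}|=r_k$. A coincidence $\sigma_1(P)=\sigma_2(P)$ with $\sigma_k\in H_k\setminus\{{\rm id}\}$ would give $\sigma_2^{-1}\sigma_1\in G$ fixing $P$, hence $\sigma_1=\sigma_2\in H_1\cap H_2=\{{\rm id}\}$, a contradiction; therefore $I_{P,1}\cap I_{P,2}=\emptyset$.

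The recovery inside each $I_{P,i}$ is Lagrange interpolation. Write $f\in V_i$ as $f=\sum_{l=0}^{r_i-1}g_l w_i^l$ with $g_l\in F_i$. Using the identity $h(\sigma(P))=\sigma^{-1}(h)(P)$ for any $h\in F$ and $\sigma\in\Aut(F/\F_q)$ acting on the rational place $P$, together with $H_i$-invariance of the $g_l$, one obtains
$$f(\sigma(P))=\sum_{l=0}^{r_i-1}g_l(P)\,w_i(\sigma(P))^l=p_{f,P}\bigl(w_i(\sigma(P))\bigr)\quad\text{for all }\sigma\in H_i,$$
where $p_{f,P}(T)=\sum_{l=0}^{r_i-1}g_l(P)T^l\in\F_q[T]$ has degree at most $r_i-1$. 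By condition~(2) the $r_i+1$ scalars $w_i(\sigma(P))$ are pairwise distinct, so the $r_i$ values $\{f(\sigma(P)):\sigma(P)\in I_{P,i}\}$ together with the known distinct nodes $w_i(\sigma(P))$ determine $p_{f,P}$ uniquely, and evaluation at $w_i(P)$ then recovers $f(P)$. Equivalently, different values of the $P$-coordinate produce disjoint projections onto $I_{P,i}$, which is exactly the condition of Definition~\ref{def:1}.

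The step I expect to require the most care is the bookkeeping of the identity $f(\sigma(P))=\sigma^{-1}(f)(P)$ and of the fact that $H_i$-invariance of the coefficients $g_l$ lets the sum inside $f(\sigma(P))$ be read as a single polynomial in $w_i(\sigma(P))$ over $\F_q$; once this reduction to interpolating a degree-$(r_i-1)$ polynomial at $r_i$ distinct scalar points is in place, the remaining pieces — complete splitting giving full orbits, triviality of $H_1\cap H_2$ giving disjointness, and the classical dimension-and-distance count for algebraic geometry codes — are routine.
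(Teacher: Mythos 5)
Your proof is correct and follows the same approach as the paper's: the minimum-distance and dimension claims come from $V\subseteq\mL(D)$ and the standard AG-code argument, the recovering sets are the $H_i$-orbits of $P$ minus $P$ itself (full of size $r_i+1$ by complete splitting), recovery within each orbit is Lagrange interpolation in $w_i$ using $H_i$-invariance of the coefficients $g_l$, and disjointness of the two orbits follows from the trivial stabilizer together with $H_1\cap H_2=\{\mathrm{id}\}$. The only difference is that you unfold the single-availability locality argument explicitly, whereas the paper delegates it to a citation of an earlier theorem; the substance is identical.
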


\begin{proof}
Firstly, $ev_{\mP}(V)$ is a subcode of the algebraic geometry code $C_{\mL}(\mP, D)$. Hence, the minimum distance of $ev_{\mP}(V)$ is at least $n-\deg(D)=d$.
By \cite [Theorem 3.1]{LMX19-2}, the image of the evaluation map
$$ev_{\mP,i}:V_i\rightarrow \F_q^n; f\mapsto (f(P_1),f(P_2),\cdots,f(P_n))$$
are $[n,\dim(V_i),\ge d; r_i]$-locally repairable codes for $i=1,2$.
In fact, the locality property follows from Lagrange interpolation, since the element $w_i$ takes pairwise distinct values on the set of $r_i$ rational places
$\{\Gs(P): \text{id}\neq \Gs\in H_i\}$ for any rational place $P\in \mP$.
For any $\sigma\in H_1$ and $\tau\in H_2$, if $\sigma(P)=\tau(P)$, then $\sigma^{-1}\tau(P)=P$. Since $P\cap E$ splits completely in $F$, we must have $\sigma^{-1}\tau=\text{id}$, i.e., $\sigma=\tau\in H_1\cap H_2=\{\text{id}\}$.
Hence, we have $\{\sigma(P): \text{id}\neq \sigma\in H_1\}\cap \{\tau(P):\text{id}\neq \tau\in H_2\}=\emptyset$ for any place $P\in \mP$, i.e., two recovering sets of $P$ are disjoint. 
Hence, the image of the evaluation map $ev_{\mP}$ is an $[n,\dim(V),\ge d; (r_1,r_2)]$-locally repairable codes with two disjoint recovering sets.
\end{proof}

\subsection{Asymptotic locally repairable codes with availability: I}
In this subsection, we will construct locally repairable codes with two disjoint recovering sets via automorphism groups of function fields of
Garcia-Stichtenoth tower $\mathcal{T}=(T_1,T_2,T_3,\cdots)$ in Section \ref{asymtower}.

Let $\mA$ be the set of all automorphisms $\Gs$ of $T_m$ over $\F_q$ with the form
$$\begin{cases}
\Gs(y_i) = cy_i  \text{ for } i=1,\cdots, m-1,\\
\Gs(y_m) = cy_m+a,
\end{cases} $$
where $c\in \F_{\ell}^*$ and $a^\ell+a=0$.
Let $H_1$ and $H_2$ be subgroups of $\mA$ with order $r_1+1=p^v$ and $ r_2+1=u$ such that $G=H_1\rtimes H_2$ is a subgroup of $\mA$ with order $(r_1+1)(r_2+1)$ in Proposition \ref{prop:2.2}.
Let $T_{m,i}$ be the fixed subfields of $T_m$ with respect to $H_i$.
From the Hurwitz genus formula, we have
$2g(T_m)-2\ge (r_i+1)[2g(T_{m,i})-2]$, that is to say
$$g(T_{m,i})-1\le \frac{g(T_m)-1}{r_i+1}.  $$
Let $\mP_m$ be a subset of $\mathbb{P}_{T_m}$ which is defined by
$\mP_m=\big{\{}(\Ga_1,\Ga_2,\cdots,\Ga_m):\Ga_1^{\ell}+\Ga_1\neq 0, \Ga_i^\ell+\Ga_i=\frac{\Ga_{i-1}^{\ell}}{\Ga_{i-1}^{\ell-1}+1} \text{ for } 2\le i\le m\big{\}}.$

\begin{lemma}\label{lem:3.2}
\begin{itemize}
\item[(1)] $1,y_{m},y_{m}^{2},\cdots, y_{m}^{r_i-1}$ are linearly independent over $T_{m,i}$;
\item[(2)] $y_{m}$ takes pairwise distinct evaluations on the rational places of $\{\sigma(P): \sigma\in H_i\}$ for any $P\in \mP_m$.
\item[(3)] $P\cap T_m^{G}$ splits completely in the extension $T_m/T_m^{G}$ for any place $P\in \mP_m$.
\end{itemize}
\end{lemma}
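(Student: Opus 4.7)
The plan is to verify each part via direct computation with the explicit actions of $H_1$, $H_2$, and $G$ on $y_m$ and on the place $P = (\alpha_1, \ldots, \alpha_m)$, repeatedly using the identity $f(\sigma(P)) = (\sigma^{-1} f)(P)$ valid for every $f \in T_m$ and $\sigma \in \Aut(T_m/\F_q)$.

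For (1), I would show $T_m = T_{m,i}(y_m)$; since $T_m/T_{m,i}$ is Galois with group $H_i$, this reduces to checking that the $H_i$-orbit of $y_m$ has full size $r_i + 1$, which then forces the minimal polynomial of $y_m$ over $T_{m,i}$ to have degree $r_i + 1$ and the $r_i$ powers $1, y_m, \ldots, y_m^{r_i - 1}$ to be linearly independent. For $H_1$ the orbit $\{y_m + a : a \in W\}$ has $|W| = p^v = r_1 + 1$ distinct elements; for $H_2$ the orbit $\{c y_m : c \in H\}$ has $|H| = u = r_2 + 1$ distinct elements because $y_m \ne 0$ in $T_m$.

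For (2), the identity yields $y_m(\sigma(P)) = \alpha_m - a$ for $\sigma \in H_1$ with $\sigma(y_m) = y_m + a$, which is visibly distinct as $a$ ranges over $W$, and $y_m(\sigma(P)) = c^{-1}\alpha_m$ for $\sigma \in H_2$ with $\sigma(y_m) = c y_m$, which is distinct as $c$ ranges over $H$ provided $\alpha_m \ne 0$. To see $\alpha_m \ne 0$ I would unwind the defining recursion of $\mP_m$: the hypothesis $\alpha_1^\ell + \alpha_1 = \alpha_1(\alpha_1^{\ell - 1} + 1) \ne 0$ gives $\alpha_1 \ne 0$ and $\alpha_1^{\ell - 1} + 1 \ne 0$, and an induction on $i$ using $\alpha_i^\ell + \alpha_i = \alpha_{i-1}^\ell/(\alpha_{i-1}^{\ell - 1} + 1)$ propagates $\alpha_i^\ell + \alpha_i \ne 0$, hence $\alpha_i \ne 0$, for every $i \le m$.

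For (3), I would invoke the criterion from Section \ref{sec:2} that $P \cap T_m^G$ splits completely in $T_m/T_m^G$ if and only if the places $\{\sigma(P) : \sigma \in G\}$ are pairwise distinct. In the semi-direct product $G = H_1 \rtimes H_2$, each $\sigma \in G$ has a unique expression acting as $y_i \mapsto c y_i$ for $i < m$ and $y_m \mapsto c y_m + a$ with $(c, a) \in H \times W$; the computation of (2) gives $y_i(\sigma(P)) = c^{-1} \alpha_i$ for $i < m$ and $y_m(\sigma(P)) = c^{-1}(\alpha_m - a)$. Comparing $y_1$-coordinates of $\sigma(P)$ and $\sigma'(P)$ with parameters $(c, a)$ and $(c', a')$ forces $c = c'$ because $\alpha_1 \ne 0$, after which comparing $y_m$-coordinates forces $a = a'$, so the $|G|$ images are pairwise distinct. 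The only mildly delicate point in the whole argument is this inductive verification that $\alpha_m \ne 0$; the remainder is routine bookkeeping with the semi-direct product structure.
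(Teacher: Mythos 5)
Your proof is correct, but it departs from the paper's argument in a couple of noteworthy ways. For part (1) the paper (following \cite[Theorem III.3]{LMX19-2}) runs a valuation argument at the totally ramified place $P_{\infty,m}$: if $\sum_{j} c_j y_m^j = 0$ with $c_j \in T_{m,i}$, the valuations $\nu_{P_{\infty,m}}(c_j y_m^j) \equiv -j \pmod{r_i+1}$ are pairwise distinct for the nonzero terms, contradicting the strict triangle inequality. You instead use the Galois/primitive-element route: the $H_i$-orbit of $y_m$ has full size $r_i+1$, so the stabilizer of $y_m$ in $H_i$ is trivial, $T_{m,i}(y_m)=T_m$, and hence $y_m$ has degree $r_i+1$ over $T_{m,i}$. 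Both are valid; your route is purely group-theoretic and does not need to know anything about $P_{\infty,m}$, which is a real simplification, whereas the paper's valuation argument is the one that scales to the companion computation $\deg((y_m)_\infty)$ used later in Theorem~\ref{thm:3.3}.

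For parts (2) and (3), the paper handles all $\sigma \in G$ at once via the Artin--Schreier trick: if $c_1\alpha_m + a_1 = c_2\alpha_m + a_2$ with $c_1 \neq c_2$ then $\alpha_m = (a_2-a_1)/(c_1-c_2)$ satisfies $\alpha_m^\ell + \alpha_m = 0$, contradicting $P \in \mathcal{P}_m$; this distinctness of $y_m$-values over $G$ simultaneously yields (2) and (3). You split (2) into the $H_1$ case (automatic) and the $H_2$ case (needs $\alpha_m \neq 0$, which you correctly establish by induction from $\alpha_1^\ell + \alpha_1 \neq 0$; note that Proposition~\ref{prop:2.1}(iii) already states $\alpha_i \in \F_q^*$, so the induction could have been invoked from there). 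Then for (3) you compare two coordinates, using $y_1$ to force equality of the $H$-component and $y_m$ to force equality of the $W$-component. This is a perfectly sound alternative and is in fact a bit more robust: it only uses $\alpha_1 \neq 0$ and $\alpha_m$ arbitrary, rather than the full Artin--Schreier non-vanishing, so it would survive generalizations of the group $G$ where the Artin--Schreier trick is not available. The trade-off is that your version of (2) does not by itself give the distinctness over all of $G$ that (3) requires, which is why you need the separate two-coordinate argument.

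One cosmetic caution: in your description of the semi-direct product you write ``each $\sigma \in G$ has a unique expression acting as $y_i \mapsto cy_i$ ... $y_m \mapsto cy_m+a$ with $(c,a) \in H \times W$.'' That is true, but the parametrization is by the composite's effect on $y_m$ rather than by the pair $(h_1, h_2) \in H_1 \times H_2$ (composing $h_1 h_2$ gives $y_m \mapsto cy_m + ca$, and $cW = W$ since $W$ is an $\F_{p^h}$-space and $c \in H \subseteq \F_{p^h}$). Your argument only needs the injectivity of $\sigma \mapsto (c,a)$, which holds, so the conclusion stands; it is just worth being explicit that this is where the $\F_{p^h}$-module structure of $W$ is silently used.
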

\begin{proof}
Please refer to the proof of \cite[Theorem III.3]{LMX19-2}.
\end{proof}

Now we can construct asymptotically good locally repairable codes with two disjoint recovering sets from subgroups of $\mA$.

\begin{theorem}\label{thm:3.3}
Let $q=\ell^2$ with $\ell=p^w$. For any integer $v$ with $1\le v\le w$ and any positive integer $u$ satisfying
$u|\gcd(p^v-1,\ell-1)$.
Let $r_1$ and $r_2$ be two positive integers such that $r_1+1=p^v$ and $r_2+1=u$. For any $0\le \delta\le1-2/(\ell-1)$, there exists a family of $q$-ary $[n_m,k_m,\ge d_m; (r_1,r_2)]$-linear locally repairable codes whose information rate $R$ and relative distance $\delta$ satisfy
$$\delta+ \frac{(r_1+1)(r_2+1)}{r_1r_2-1} R  \ge \frac{\ell-2}{\ell-1}-\frac{r_1+r_2}{q-\ell}-\frac{1}{q-\ell}\cdot \frac{(r_1-r_2)^2}{r_1r_2-1}.$$
\end{theorem}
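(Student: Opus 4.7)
The plan is to apply Proposition~\ref{prop:3.1} to $F = T_m$ with $G = H_1 \rtimes H_2$ from Proposition~\ref{prop:2.2} (so $|H_i| = r_i + 1$) and $w_1 = w_2 = y_m$. Lemma~\ref{lem:3.2} verifies that $1, y_m, \ldots, y_m^{r_i - 1}$ are linearly independent over $T_{m,i}$, that $y_m$ separates each $H_i$-orbit on $\mP_m$, and that every place of $\mP_m$ splits completely in $T_m / T_m^G$; by Proposition~\ref{prop:2.1}(iii), the evaluation length is $n_m = |\mP_m| = (q - \ell)\ell^{m-1}$.

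I would then take $D = s \cdot (y_m)_\infty$ for an integer $s$ to be fixed in terms of $\delta$. By Proposition~\ref{prop:2.1}(i)--(ii), $(y_m)_\infty$ has total degree $[T_m : \F_q(y_m)] = \ell^{m-1}$ and its support avoids $\mP_m$ since $y_m$ is finite at every point of $\mP_m$, so $\deg D = s\ell^{m-1}$ and $d_m \ge n_m - s\ell^{m-1}$. A stabilizer computation shows that every pole of $y_m$ is totally ramified in $T_m / T_{m,1}$ (because $H_1$ fixes every pole of $y_m$), whereas in $T_m / T_{m,2}$ only $P_{\infty, m}$ is ramified among the poles of $y_m$: the remaining poles, which come from the cascading zeros of $y_{i-1}^{\ell - 1} + 1$ along the tower, lie over places where some $y_j$-value sits in $\F_q^\ast$ and therefore have trivial $H_2$-stabilizer. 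Solving the conorm inequality from Proposition~\ref{prop:3.1} at each pole then produces divisors $D_i$ with $\deg D_i = (s - r_i + 1)\ell^{m-1}/(r_i + 1) + O(1)$ for both $i$.

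The dimension estimate comes from the inclusion--exclusion bound
\begin{equation*}
\dim V \;\ge\; \dim V_1 + \dim V_2 - \dim L(D),
\end{equation*}
valid because $V_1, V_2 \subseteq L(D)$. Taking $s$ large enough for Riemann--Roch equality on $D$, $D_1$, $D_2$, bounding $g(T_{m,i}) \le 1 + (g(T_m) - 1)/(r_i + 1)$ via Hurwitz, and applying the identity $\tfrac{r_1}{r_1+1} + \tfrac{r_2}{r_2+1} - 1 = \tfrac{r_1 r_2 - 1}{(r_1+1)(r_2+1)}$, one arrives at
\begin{equation*}
\dim V \;\ge\; \frac{r_1 r_2 - 1}{(r_1 + 1)(r_2 + 1)} \bigl( s\ell^{m-1} - g(T_m) \bigr) - \ell^{m-1}\!\left(\frac{r_1(r_1 - 1)}{r_1 + 1} + \frac{r_2(r_2 - 1)}{r_2 + 1}\right) + O(1).
\end{equation*}

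To close, divide by $n_m = (q - \ell)\ell^{m-1}$, multiply by $K := (r_1 + 1)(r_2 + 1)/(r_1 r_2 - 1)$, and use $\delta \le 1 - s/(q - \ell)$ together with $g(T_m)/n_m \to 1/(\ell - 1)$ from Proposition~\ref{prop:2.1}(iv)--(v). The algebraic identity $(r_2 + 1)r_1(r_1 - 1) + (r_1 + 1)r_2(r_2 - 1) = (r_1 + r_2)(r_1 r_2 - 1) + (r_1 - r_2)^2$ collapses $K$ times the bracketed quantity into exactly $(r_1 + r_2) + (r_1 - r_2)^2/(r_1 r_2 - 1)$, producing the claimed correction $-(r_1 + r_2)/(q - \ell) - (r_1 - r_2)^2/[(q - \ell)(r_1 r_2 - 1)]$. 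The main obstacle will be the explicit ramification analysis of the $\ell^{m-1} - 1$ ``extra'' poles of $y_m$: it is precisely this $\Theta(\ell^{m-1})$ contribution to $\deg D_i$ that creates the finite, non-vanishing correction terms and distinguishes the bound from a naive $\delta + KR \ge (\ell - 2)/(\ell - 1)$.
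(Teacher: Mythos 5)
Your proposal follows essentially the same plan as the paper's proof: invoke Proposition~\ref{prop:3.1} with $w_1=w_2=y_m$ and the separating, linear-independence, and complete-splitting facts from Lemma~\ref{lem:3.2}; choose a divisor $D$; extract the maximal $D_i$ of $T_{m,i}$ with $(r_i-1)(y_m)_\infty+\mathrm{Conorm}_{T_m/T_{m,i}}(D_i)\le D$; apply the inclusion--exclusion $\dim V\ge\dim V_1+\dim V_2-\dim\mL(D)$ together with Riemann's inequality and the Hurwitz bound $g(T_{m,i})\le 1+(g(T_m)-1)/(r_i+1)$; and pass to the limit. Your two algebraic identities are exactly what collapse the correction into $\tfrac{r_1+r_2}{q-\ell}+\tfrac{(r_1-r_2)^2}{(q-\ell)(r_1r_2-1)}$. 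Where you go further than the paper is in pinning $D=s\,(y_m)_\infty$ and giving the ramification analysis behind the key degree bound $\deg D_i+1\ge\bigl(n_m-d_m-(r_i-1)\ell^{m-1}\bigr)/(r_i+1)$, which the paper states without explanation; your stabilizer argument (each pole of $y_m$ is $H_1$-fixed since $H_1$ fixes $y_1,\dots,y_{m-1}$ and that pole is the unique place over its restriction to $T_{m-1}$, while each non-infinite pole carries some finite nonzero $y_j$-value and so has trivial $H_2$-stabilizer) is the right justification and matches what the paper relies on implicitly.

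One point to tighten: your claimed estimate $\deg D_1=(s-r_1+1)\ell^{m-1}/(r_1+1)+O(1)$ is too optimistic. Since every pole of $y_m$ is totally ramified in $T_m/T_{m,1}$, the conorm constraint forces a floor $\lfloor(s-r_1+1)c_P/(r_1+1)\rfloor$ at each of the $\Theta(\ell^{m-1})$ poles $P$, and these rounding losses sum to $\Theta(\ell^{m-1})=\Theta(n_m)$, which does not vanish after dividing by $n_m$ and would degrade the bound. You either need to restrict to $s\equiv r_1-1\pmod{r_1+1}$ (still a dense set of rates, enough for the asymptotic statement), or concentrate the excess of $D$ at a single totally ramified place, e.g.\ $D=(\max(r_1,r_2)-1)(y_m)_\infty+tP_{\infty,m}$, which eliminates the accumulation entirely. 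For $D_2$ your $O(1)$ is genuine: only $P_{\infty,m}$ incurs a floor, the remaining poles lying over unramified places of $T_{m,2}$.
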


\begin{proof}
Choose an effective divisor $D$ of $T_m$ with degree $n_m-d_m$ such that $\supp(D)\cap \supp(\mP)=\emptyset$ and $(r_i-1)\cdot (y_m)_\infty\le D$ for $i=1,2$. 
Let $D_i$ be the effective divisor of $T_{m,i}$ with the largest possible degree such that $(r_i-1)\cdot (y_m)_\infty+Conorm_{T_m/T_{m,i}}(D_i)\le D$ for $i=1,2$.
Let $\{f_{i,j}: 1\le j\le \dim(D_i)\}$ be a basis of $L(D_i)$. Let
$$V_{m,i}=\left\{\sum_{l=0}^{r_i-1}\left(\sum_{j=1}^{\dim(D_i)} c_{i,j,l} f_{i,j}\right)y_m^l: \forall c_{i,j,l}\in \F_q\right\}$$
be two subspaces of $\mL(D)$. By Lemma \ref{lem:3.2} and \cite[Theorem III.3]{LMX19-2}, the image of the evaluation map $ev_{\mP_m,i}: V_{m,i}\rightarrow \F_q^n$ are $[n_m=(q-\ell)\ell^{m-1},k_{m,i},\ge d_m; r_i]$-locally repairable codes for $i=1,2$.
If the designed minimum distance is at least $n_m-(r_i-1)\ell^{m-1}-(r_i+1)\deg(D_i)\ge d_m$, then we have
$$\deg(D_i)+1\ge \frac{n_m-d_m-(r_i-1)\ell^{m-1}}{r_i+1}$$
and
$k_{m,i}=r_i\dim(D_i)\ge r_i[\deg(D_i)+1-g(T_{m,i})].$
Let $V_m=V_{m,1}\cap V_{m,2}$. If $0\le \delta\le1-2/(\ell-1)$, then we can assume that $\deg(D)\ge 2g(T_m)-1$. Hence, we have $\dim(D)=\deg(D)-g(T_m)+1=n_m-d_m-g(T_m)+1$.
From the direct computation, the dimension of $V_m$ is determined by
\begin{align*}
k_m&\ge k_{m,1}+k_{m,2}-\dim(D)\\
&\ge  \frac{r_1r_2-1}{(r_1+1)(r_2+1)} (n_m-d_m-g(T_m)+1)\\
&\ \ \ -\frac{n_m}{q-\ell}\left[\frac{r_1(r_1-1)}{r_1+1}+\frac{r_2(r_2-1)}{r_2+1}\right]-r_1-r_2.
\end{align*}
By dividing $n_m$ at both sides of the above inequality, we have
\begin{align*}
\frac{k_m}{n_m}\ge  & \frac{r_1r_2-1}{(r_1+1)(r_2+1)}\left(1-\frac{d_m}{n_m}-\frac{g(T_m)-1}{n_m}\right)&\\
& -\frac{1}{q-\ell}\left[\frac{r_1(r_1-1)}{r_1+1}+\frac{r_2(r_2-1)}{r_2+1}\right]-\frac{r_1+r_2}{n_m}.
\end{align*}
Let $m$ approach to $\infty$ and take limits. Thus, we have
$$\delta+ \frac{(r_1+1)(r_2+1)}{r_1r_2-1} R  \ge \frac{\ell-2}{\ell-1}-\frac{r_1+r_2}{q-\ell}-\frac{1}{q-\ell}\cdot \frac{(r_1-r_2)^2}{r_1r_2-1}.$$
This completes the proof.
\end{proof}

Let $H_1$ and $H_2$ be two subgroups of $\mA$ consisting of automorphisms with $c=0$ or let $H_1$ and $H_2$ be two subgroups of $\mA$ with $a=1$. By Proposition \ref{prop:3.1} and Theorem \ref{thm:3.3}, we have the following results.
\begin{theorem}\label{thm:3.4}
Let $q=\ell^2$ with $\ell=p^w$. Let $r_1$ and $r_2$ be two positive integers satisfying one of the conditions:
\begin{itemize}
\item[(1)] $(r_1+1)|(\ell-1)$, $(r_2+1)|(\ell-1)$ and $\gcd(r_1+1,r_2+1)=1$;
\item[(2)] $(r_1+1)|\ell$, $(r_2+1)|\ell$ and $(r_1+1)\cdot (r_2+1)\le \ell$.
\end{itemize}
For any $0\le \delta\le1-2/(\ell-1)$, there exists a family of $q$-ary $[n_m,k_m,\ge d_m; (r_1,r_2)]$-linear locally repairable codes whose information rate $R$ and relative distance $\delta$ satisfy
$$\delta+ \frac{(r_1+1)(r_2+1)}{r_1r_2-1} R  \ge \frac{\ell-2}{\ell-1}-\frac{r_1+r_2}{q-\ell}-\frac{1}{q-\ell}\cdot \frac{(r_1-r_2)^2}{r_1r_2-1}.$$
\end{theorem}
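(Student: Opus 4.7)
The plan is to reduce Theorem \ref{thm:3.4} to Proposition \ref{prop:3.1}, applied to the Garcia-Stichtenoth tower $(T_m)$ with $w_1 = w_2 = y_m$ and evaluation set $\mP = \mP_m$, exactly as in the proof of Theorem \ref{thm:3.3}; the only novelty is the choice of subgroups $H_1, H_2 \subseteq \mA$, now adapted to the new arithmetic hypotheses so that $G = H_1 H_2$ is a direct product (hence trivially semi-direct) of order $(r_1+1)(r_2+1)$.

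In case (1), since $\F_\ell^*$ is cyclic of order $\ell-1$ it has a unique subgroup $C_i$ of order $r_i+1$ for each $i$, and one sets
$$H_i = \{\Gs \in \mA: \Gs(y_j) = c y_j \text{ for all } 1 \le j \le m, \, c \in C_i\},$$
i.e., the $a = 0$ subgroups of $\mA$. Coprimality $\gcd(r_1+1, r_2+1) = 1$ forces $C_1 \cap C_2 = \{1\}$, so $H_1 \cap H_2 = \{\text{id}\}$ and $G = H_1 \times H_2$. In case (2), the set $W = \{a \in \F_q: a^\ell + a = 0\}$ is an $\F_\ell$-line in $\F_q$ of cardinality $\ell = p^w$, hence an $\F_p$-space of dimension $w$; the divisibility $(r_i+1) \mid \ell$ forces $r_i + 1 = p^{v_i}$ with $v_1 + v_2 \le w$, so one may pick $\F_p$-subspaces $W_i \subseteq W$ of dimension $v_i$ with $W_1 \cap W_2 = \{0\}$. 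Setting
$$H_i = \{\Gs \in \mA: \Gs(y_j) = y_j \text{ for } j < m, \, \Gs(y_m) = y_m + a, \, a \in W_i\}$$
(the $c = 1$ subgroups) again yields $G = H_1 \times H_2$ of the desired order.

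With these choices, the three hypotheses of Proposition \ref{prop:3.1} transfer verbatim from Lemma \ref{lem:3.2}: linear independence of $1, y_m, \ldots, y_m^{r_i-1}$ over $T_{m,i}$ follows from total ramification of $P_{\infty,m}$ in $T_m/T_{m,i}$ combined with $\nu_{P_{\infty,m}}(y_m) = -1$ and the strict triangle inequality; distinct evaluations of $y_m$ on $\{\Gs(P): \Gs \in H_i\}$ reduce, for $P = (\Ga_1, \ldots, \Ga_m) \in \mP_m$, to the identity $y_m(\Gs^{-1}(P)) = c\Ga_m + a$, whose collision would force $\Ga_m^\ell + \Ga_m = 0$, contradicting $P \in \mP_m$; and complete splitting of $P \cap T_m^G$ in $T_m/T_m^G$ is the same distinct-evaluation statement applied to the whole group $G$. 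The rate-distance bound then drops out of the same Riemann-Roch bookkeeping used in Theorem \ref{thm:3.3}: choose $D$ effective with $\supp(D) \cap \mP_m = \emptyset$ and $(r_i - 1)(y_m)_\infty \le D$, bound $\dim V_{m,i}$ via Riemann-Roch on $T_{m,i}$ using $g(T_{m,i}) - 1 \le (g(T_m) - 1)/(r_i + 1)$ from Hurwitz, apply $\dim V_m \ge \dim V_{m,1} + \dim V_{m,2} - \dim(D)$, divide by $n_m = (q - \ell)\ell^{m-1}$, and let $m \to \infty$ using $g(T_m)/n_m \to 1/(\ell - 2)$ from Proposition \ref{prop:2.1}.

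The only real content is the choice of $H_1, H_2$; once it is made, everything else transcribes from the proof of Theorem \ref{thm:3.3}. The arithmetic conditions map cleanly onto the group-theoretic requirements: $\gcd(r_1 + 1, r_2 + 1) = 1$ in case (1) is exactly what makes two cyclic subgroups of the cyclic group $\F_\ell^*$ independent, while $(r_1 + 1)(r_2 + 1) \le \ell$ in case (2) is precisely the dimension inequality $v_1 + v_2 \le w$ needed to fit two independent $\F_p$-subspaces of the required sizes inside the $w$-dimensional space $W$.
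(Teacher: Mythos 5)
Your proposal is correct and follows essentially the same route as the paper: take $H_1,H_2$ to be the $a=0$ (multiplicative) subgroups of $\mA$ in case (1) and the $c=1$ (additive) subgroups in case (2), then apply Proposition \ref{prop:3.1} with $w_1=w_2=y_m$ and repeat the Riemann--Roch estimate from Theorem \ref{thm:3.3}. (In fact the paper's own one-line indication has the labels garbled---it writes ``$c=0$'', which is impossible since $c\in\F_\ell^*$---whereas your reading of which case uses which family of subgroups is the correct one.) The only slip is the asserted limit $g(T_m)/n_m\to 1/(\ell-2)$: since $n_m=(q-\ell)\ell^{m-1}=(\ell-1)\ell^m$ and $g(T_m)\sim\ell^m$, the correct limit is $g(T_m)/n_m\to 1/(\ell-1)$, which yields the factor $1-\tfrac{1}{\ell-1}=\tfrac{\ell-2}{\ell-1}$ in the final bound.
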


\subsection{Asymptotic locally repairable codes with availability: II}
In this subsection, let us consider function fields of the asymptotically optimal Garcia-Stichtenoth tower $\mathcal{T}=(T_1,T_2,T_3,\cdots)$ which are given by the rational function field $T_1=\F_q(x_1)$ and $T_m=T_{m-1}(z_m)$ defined by
$z_{m}^{\ell}+z_{m}=x_{m-1}^{\ell+1}$
with $x_m=z_m/x_{m-1}$ recursively for $m\ge 2$ in \cite{GS95}.
The zero of $x_1-\alpha_1$ in $T_1$ splits completely in $T_m/T_1$ for any $m\ge 2$ and $\alpha_1\in \F_q\setminus \{0\}$ from \cite{GS95}.
Let  $\mP_m$ be the set of all rational places of $T_m$ lying above the zero of $x_1-\alpha_1$ with $\alpha_1\neq 0$.
Hence, the cardinality of $\mP_m$ is $(q-1)\ell^{m-1}$. 

\begin{theorem}\label{thm:3.5}
Let $q=\ell^2$ with $\ell=p^w$. Let $r_1$ and $r_2$ be two positive integers satisfying one of the conditions:  
\begin{itemize}
\item[(1)] $(r_1+1)|(\ell+1)$, $(r_2+1)|(\ell+1)$ and $\gcd(r_1+1,r_2+1)=1$;
\item[(2)] $(r_1+1)|\ell$, $(r_2+1)|\ell$ and $(r_1+1)\cdot (r_2+1)\le \ell$. 
\end{itemize}
For any $0\le \delta\le1-2/(\ell-1)$, there exists a family of $q$-ary $[n_m,k_m,\ge d_m; (r_1,r_2)]$-linear locally repairable codes whose information rate $R$ and relative distance $\delta$ satisfy
$$\delta+ \frac{(r_1+1)(r_2+1)}{r_1r_2-1} R  \ge \frac{\ell-2}{\ell-1}-\frac{r_1+r_2}{q-1}-\frac{1}{q-1}\cdot \frac{(r_1-r_2)^2}{r_1r_2-1}.$$
\end{theorem}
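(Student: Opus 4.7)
The plan is to mirror the proof of Theorem \ref{thm:3.3}, now applied to the first Garcia-Stichtenoth tower $\mathcal{T}=(T_1,T_2,\ldots)$ of \cite{GS95}. The evaluation set $\mP_m$ consists of the $n_m=(q-1)\ell^{m-1}$ rational places above $x_1-\alpha_1$ for $\alpha_1\in \F_q^*$, and a short induction on the tower relations shows that any such place $P=(\alpha_1,\ldots,\alpha_m)$ also has $\alpha_i\neq 0$ and $z_i(P)\neq 0$ for every $i$. The core task in each of the two cases is to exhibit subgroups $H_1,H_2\le \Aut(T_m/\F_q)$ of orders $r_1+1,r_2+1$ with trivial intersection, and an element playing the role of $w_1,w_2$ in Proposition \ref{prop:3.1}; the rest of the bookkeeping is essentially identical to Theorem \ref{thm:3.3}.

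For case (1), for each $\alpha\in \F_q^*$ the rule $\sigma_\alpha(x_i)=\alpha^{\ell^{i-1}}x_i$ extends to an automorphism of $T_m$ (a direct check shows that the derived relation $x_m^\ell x_{m-1}^{\ell-1}+x_m=x_{m-1}^\ell$ is preserved, using $\alpha^{\ell^2-1}=1$). This gives a cyclic subgroup $C\cong \F_q^*$ of $\Aut(T_m/\F_q)$. Since $(r_i+1)\mid (\ell+1)\mid (q-1)$, $C$ contains subgroups $H_i$ of order $r_i+1$, and $\gcd(r_1+1,r_2+1)=1$ forces $G=H_1\times H_2\le C$. Take $w_1=w_2=x_m$: because $\gcd(r_i+1,\ell)=1$, the map $\alpha\mapsto \alpha^{\ell^{m-1}}$ is a bijection of $H_i$, so $x_m$ admits $r_i+1$ distinct $H_i$-conjugates and $x_m^{r_i+1}\in T_m^{H_i}$; hence $1,x_m,\ldots,x_m^{r_i-1}$ is $T_m^{H_i}$-linearly independent, and $x_m$ separates each $H_i$-orbit in $\mP_m$ (using $x_m(P)\neq 0$).

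For case (2), I use the translations $\tau_a:z_m\mapsto z_m+a$ for $a\in W:=\{a\in \F_q:a^\ell+a=0\}$, which form $\Gal(T_m/T_{m-1})\cong W$, an $\F_p$-vector space of dimension $w$ (with $\ell=p^w$). Writing $r_i+1=p^{v_i}$, the hypothesis $(r_1+1)(r_2+1)\le \ell$ reads $v_1+v_2\le w$, so one can pick $\F_p$-subspaces $H_1,H_2\le W$ of dimensions $v_1,v_2$ with $H_1\cap H_2=\{0\}$ and set $G=H_1\oplus H_2$. With $w_1=w_2=z_m$, the polynomial $\prod_{a\in H_i}(X-z_m-a)\in T_m^{H_i}[X]$ has degree $r_i+1=[T_m:T_m^{H_i}]$ and $z_m$ as a root, so it is the minimal polynomial of $z_m$; hence $1,z_m,\ldots,z_m^{r_i-1}$ is $T_m^{H_i}$-linearly independent, and $z_m(\tau_a(P))=z_m(P)-a$ takes distinct values across each $H_i$-orbit. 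In both cases, any $\sigma\in G$ fixing some $P\in \mP_m$ is the identity (in (1), $\sigma_\alpha(P)=P$ combined with $\alpha_1\neq 0$ forces $\alpha=1$; in (2), a nonzero translation changes $z_m(P)$), so by the criterion recalled in Section \ref{sec:2}, the image of $\mP_m$ in $T_m^G$ splits completely in $T_m/T_m^G$.

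Proposition \ref{prop:3.1} then produces an $[n_m,k_m,\ge d_m;(r_1,r_2)]$-LRC. Following the bookkeeping of Theorem \ref{thm:3.3}, one combines $k_m\ge k_{m,1}+k_{m,2}-\dim \mL(D)$ with $k_{m,i}\ge r_i(\deg D_i+1-g(T_m^{H_i}))$, the Hurwitz bound $g(T_m^{H_i})-1\le (g(T_m)-1)/(r_i+1)$, $\deg D_i\ge (\deg D-(r_i-1)\ell^{m-1})/(r_i+1)-1$, and $\dim \mL(D)=\deg D-g(T_m)+1$ for $\deg D\ge 2g(T_m)-1$. Dividing by $n_m$ and letting $m\to\infty$ uses $g(T_m)/n_m\to 1/(\ell-1)$ and $\ell^{m-1}/n_m=1/(q-1)$; the only arithmetic change from Theorem \ref{thm:3.3} is that $q-\ell$ is replaced by $q-1$, reflecting the larger evaluation set. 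The main obstacle I anticipate is case (2): since the multiplicative part of $\Aut(T_m/\F_q)$ has order coprime to $p$, the $p$-power subgroups required here must sit inside a single Galois step $T_m/T_{m-1}$, and one must verify carefully that $z_m$ here plays exactly the role that $y_m$ plays in Lemma \ref{lem:3.2}, both for the basis property over $T_m^{H_i}$ and for separation of orbits.
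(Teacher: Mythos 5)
Your proof is correct and follows essentially the same route as the paper: the GS95 tower, the translation subgroup of $\Gal(T_m/T_{m-1})$ for condition (2), a cyclic multiplicative automorphism group for condition (1), followed by Proposition \ref{prop:3.1} and the bookkeeping of Theorem \ref{thm:3.3}. The only cosmetic difference is that in case (1) the paper realizes the multiplicative group of order $\ell+1$ by $\sigma(x_1)=ax_1$, $\sigma(z_i)=z_i$ with $a^{\ell+1}=1$ and uses $x_1$ as the separating element, while you embed it in $C\cong\F_q^*$ acting diagonally on the $x_i$ and use $x_m$; both choices satisfy the hypotheses of Proposition \ref{prop:3.1} and yield the same asymptotic bound.
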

\begin{proof}
\begin{itemize}
\item[(1)] Let $H_1$ and $H_2$ be two subgroups of $\mB$ consisting of automorphisms with 
$$\begin{cases}
\Gs(x_1) = ax_1,\\
\Gs(z_i) = z_i, \text{ for } i=2,3,\cdots, m,\\
\end{cases} $$
where $a^{\ell+1}=1$. 
For any $P\in \mP_m$, it comes from the common zero of $x_1-\alpha_1,z_2-\alpha_2,\cdots, z_m-\alpha_m$ and can be denoted by $(\Ga_1,\Ga_2,\cdots,\Ga_m)$. 
It is easy to verify that $x_1(\Gs^{-1}(P))=\Gs(x_1)(P)=ax_1(P)=a\Ga_1$.
Hence, $\sigma^{-1}(P)=(a\Ga_1,\Ga_2,\cdots,\Ga_{m-1}, \Ga_{m})$ are pairwise distinct  for all $\sigma\in \mB$ and $x_1$ takes pairwise distinct  evaluations on rational places of $\{\sigma^{-1}(P): \sigma\in \mB\}$. 
\item[(2)]   Let $H_1$ and $H_2$ be two subgroups of $\mC$ consisting of automorphisms with 
$$\begin{cases}
\Gs(x_1) = x_1, \quad 
\Gs(z_i) = z_i \text{ for } i=2,\cdots, m-1,\\
\Gs(z_m) = z_m+c,
\end{cases} $$
where $c^\ell+c=0$. 
It is easy to verify that $z_m(\Gs^{-1}(P))=\Gs(z_m)(P)=(z_m+c)(P)=\Ga_m+c$.
Hence, $\sigma^{-1}(P)=(\Ga_1,\Ga_2,\cdots,\Ga_{m-1}, \Ga_{m}+c)$ are pairwise distinct  for all $\sigma\in  \mC$ and $z_m$ takes pairwise distinct evaluations on rational places of $\{\sigma^{-1}(P): \sigma\in \mC\}$. 
\end{itemize}
By  Proposition  \ref{prop:3.1}, this theorem can be proved similarly as Theorem \ref{thm:3.3}. We omit the details.
\end{proof}

\section{Conclusion}
In this manuscript, we provide more asymptotic construction of locally repairable codes with multiple recovering sets by investigating the group structures of the automorphism groups of function fields of the Garcia-Stichtenoth towers given in \cite{GS95,GS96}.

\end{document}